\documentclass[12pt]{article}
\usepackage{amsmath, amssymb, amsthm, mathtools}
\usepackage{mathrsfs}
\usepackage[utf8]{inputenc}
\usepackage{geometry}
\usepackage{amsthm}
\usepackage{authblk}
\usepackage{multicol}
\usepackage{amssymb}
\usepackage{amsmath}
\usepackage{mathrsfs}
\usepackage[numbers,square]{natbib}
\usepackage{geometry}
\usepackage{tikz}
\usepackage{hyperref}
\newtheorem{theorem}{Theorem}[section]
\newtheorem{definition}[theorem]{Definition}

\newtheorem{lemma}[theorem]{Lemma}
\newtheorem{proposition}[theorem]{Proposition}
\newtheorem{remark}[theorem]{Remark}
\newtheorem{assumption}{Assumption}[section]

\title{A Gauge-Invariant Bundle Isomorphism Between Complex Velocity Fields and Symmetric Logarithmic Derivatives}
\author{Jorge Meza-Dom\'{\i}nguez\thanks{E-mail: \href{mailto:jorge.meza@cinvestav.mx}{jorge.meza@cinvestav.mx}}}
\affil{Departamento de F\'{\i}sica, Centro de Investigaci\'on y de Estudios Avanzados del Instituto Polit\'ecnico Nacional, Av. Instituto Polit\'ecnico Nacional 2508, San Pedro Zacatenco, M\'exico 07360, CDMX.}
\date{}

\begin{document}

\maketitle

\begin{abstract}
We establish a rigorous bundle isomorphism between the complex velocity field 
$\eta_{\mu} = \pi_{\mu} - i u_{\mu}$, obtained by averaging matter dynamics 
over stochastic gravitational fluctuations, and the symmetric logarithmic 
derivative (SLD) operator $L_{\mu}$ of quantum estimation theory. 
The isomorphism $\widetilde{\mathcal{T}}: \Gamma(E/{\sim}) \to \Gamma(\mathcal{L})$ 
maps gauge-equivalence classes of sections of the pullback bundle 
$E = \pi_2^*(T^*M)$ over $\mathcal{C} \times M$ to SLD operators on the 
Hilbert space $\mathcal{H}_0 = L^2(\mathcal{C}, \nu_0)$, where $\mathcal{C}$ 
is the infinite-dimensional Fr\'echet manifold of matter fields and $\nu_0$ 
is a fixed Gaussian measure. We prove that $\widetilde{\mathcal{T}}$ and the 
associated quantum Fisher metric are independent of the choice of $\nu_0$, 
rendering the construction intrinsic to the physical probability density. 
The Fisher metric acquires a simple form in terms of the Madelung--Bohm 
velocities:
$g_{\mu\nu}^{\mathrm{FS}} = \frac{4m^2}{\hbar^2} 
\bigl[\operatorname{Cov}(\pi_\mu,\pi_\nu) 
+ \operatorname{Cov}(u_\mu,u_\nu)\bigr]_{\mathcal{P}}$. 
As a consequence, the flat $U(1)$ connection defined by $\eta_{\mu}$ yields 
a quantized holonomy for non-contractible spacetime loops, predicting 
topological phases that may be observable in atom interferometry.
\end{abstract}

\section{Introduction}

The Madelung–Bohm formulation of quantum mechanics \cite{madelung1927, bohm1952, Chavanis2024} provides a hydrodynamic picture of the wave function by writing $\Psi = \sqrt{\rho} e^{iS/\hbar}$ and defining two real velocity fields:
\[
\pi_\mu = \frac{1}{m}\nabla_\mu S, \qquad u_\mu = \frac{\hbar}{2m}\nabla_\mu \ln \rho.
\]
While $\pi_\mu$ governs the classical (geodesic) motion of a particle, the origin and physical interpretation of the stochastic velocity $u_\mu$ have remained elusive since the early days of quantum mechanics \cite{sbitnev2012}. Recent work \cite{escobar2025} has proposed that $u_\mu$ arises from averaging over a stochastic background of gravitational waves. In this framework, the two velocities unify into a single complex field
\[
\eta_\mu = \pi_\mu - i u_\mu,
\]
which satisfies a flatness condition and leads to a quantized holonomy for non-contractible spacetime loops.

The present paper provides a rigorous mathematical foundation for this complex velocity. We show that $\eta_\mu$ lives naturally as a section of the pullback bundle $E = \pi_2^*(T^*M)$ over the product of the infinite-dimensional configuration space $\mathcal{C}$ of matter fields and spacetime $M$. Using the Schrödinger representation, we construct an explicit isomorphism between $\eta_\mu$ (modulo a gauge equivalence) and the symmetric logarithmic derivative (SLD) operator $L_\mu$ — the central object in quantum estimation theory that saturates the quantum Cramér–Rao bound \cite{braunstein1994, paris2009}.

The main result of this paper is a bundle isomorphism
\[
\widetilde{\mathcal{T}}: \Gamma(E / \sim) \longrightarrow \Gamma(\mathcal{L}),
\]
where $\mathcal{L}$ denotes the bundle of SLD operators over $M$. This isomorphism preserves the flat $U(1)$ connection defined by $\eta_\mu$ and maps the quantum Fisher information metric to a simple expression:
\[
g_{\mu\nu}^{\mathrm{FS}} = \frac{4m^2}{\hbar^2}\bigl[ \operatorname{Cov}(\pi_\mu,\pi_\nu) 
+ \operatorname{Cov}(u_\mu,u_\nu) \bigr]_{\mathcal{P}}.
\]

A key technical achievement of this work is a rigorous proof that the entire construction is independent of the choice of the reference Gaussian measure $\nu_0$ on the Fr\'echet manifold $\mathcal{C}$, thereby eliminating any ambiguity associated with the infinite-dimensional setting.

As a direct consequence, the holonomy of $\eta_\mu$ around non-contractible loops is quantized, yielding topological phases that are potentially observable in atom interferometry experiments such as MAGIS-100 \cite{abe2021}. This establishes a deep link between stochastic gravity, quantum information geometry, and experimentally testable quantum gravity phenomenology.

The paper is organized as follows. In Section 2 we set up the geometric framework: the configuration space, the pullback bundle, the stochastic average that defines the complex velocity, and the rigorous construction of the Hilbert space using a fixed Gaussian measure. Section 3 contains the main theorem: the explicit bundle isomorphism between $\eta_\mu$ and the SLD operator, including a proof of gauge invariance and the construction of the inverse map. Section 4 expresses the quantum Fisher metric directly in terms of $\eta_\mu$. Section 5 proves the independence of the construction from the reference measure. Section 6 discusses the flatness of the $U(1)$ connection and the quantization of holonomy. Section 7 concludes with a summary of the results and their physical implications.

\section{Geometric Setup}

\subsection{Configuration Space and Pullback Bundle}

Let \(M\) be an \(n\)-dimensional Lorentzian manifold (spacetime) and let \(\mathcal{C}\) be the infinite-dimensional configuration space of matter fields \((\Phi ,A)\). We assume \(\mathcal{C}\) is a Fr\'echet manifold with a smooth structure allowing variational calculus.

\begin{definition}
Define the pullback bundle
\[
E\coloneqq \pi_{2}^{*}(T^{*}M)\longrightarrow \mathcal{C}\times M,
\]
where \(\pi_{2}: \mathcal{C} \times M \to M\) is the projection onto the second factor. The fibre of \(E\) over \((\Phi , x)\) is the cotangent space \(T_{x}^{*} M\).
\end{definition}

Sections of \(E\) are smooth maps \(\eta: \mathcal{C} \times M \to T^{*}M\) such that \(\eta(\phi, x) \in T_{x}^{*}M\).

\subsection{Stochastic Average and Complex Velocity}

Consider a stochastic metric fluctuation \(h_{\mu\nu}\) with distribution \(P[h]\) satisfying \(\langle h_{\mu\nu}\rangle = 0\). The matter action is \(S[\phi, A; g^{(0)} + h]\). Define the averaged amplitude~\cite{glimmjaffe1987}
\[
\mathcal{K}[\phi, A; x] := \int \mathcal{D}[h] \, P[h] \, \exp\left( \frac{i}{\hbar} S[\phi, A; g^{(0)} + h] \right).
\]

\begin{assumption}
We assume \(\mathcal{K} \neq 0\) everywhere on \(\mathcal{C} \times M\) and that it admits a smooth polar decomposition
\[
\mathcal{K} = \sqrt{\mathcal{P}} \, e^{i\mathcal{S}/\hbar}, \qquad \mathcal{P} > 0, \quad \mathcal{S} \in \mathbb{R}.
\]
\end{assumption}

\begin{definition}
The complex velocity is
\[
\eta_{\mu} := -i \frac{\hbar}{m} \nabla_{\mu} \ln \mathcal{K}.
\]
In terms of the polar decomposition,
\[
\eta_{\mu} = \frac{1}{m} \nabla_{\mu} \mathcal{S} - i \frac{\hbar}{2m} \nabla_{\mu} \ln \mathcal{P} =: \pi_{\mu} - i u_{\mu}.
\]
\end{definition}

\subsection{Hilbert Space Construction}

We now provide a rigorous construction of the Hilbert space using a fixed Gaussian measure, following the framework of constructive quantum field theory~\cite{glimmjaffe1987}.

\begin{definition}
Let \(\nu_0\) be a fixed Gaussian measure on the Fr\'echet manifold \(\mathcal{C}\) (e.g., the measure of a free massive scalar field). Such a measure is rigorously defined via Minlos' theorem and provides a countably additive probability measure on the measurable space \((\mathcal{C}, \mathcal{B})\), where \(\mathcal{B}\) is the Borel \(\sigma\)-algebra generated by the cylinder sets.
\end{definition}

\begin{definition}
Define the reference Hilbert space
\[
\mathcal{H}_0 := L^2(\mathcal{C}, \nu_0),
\]
with inner product \(\langle f, g \rangle_0 = \int_{\mathcal{C}} \overline{f(\phi)} g(\phi) \, d\nu_0(\phi)\).
\end{definition}

\begin{assumption}
The functional \(\mathcal{P}(\cdot, x): \mathcal{C} \to \mathbb{R}^+\) is measurable, strictly positive \(\nu_0\)-almost everywhere, and satisfies the normalization condition
\[
\int_{\mathcal{C}} \mathcal{P}(\phi, x) \, d\nu_0(\phi) = 1
\]
for all \(x \in M\). Furthermore, we assume \(\mathcal{P}(\cdot, x)\) and \(\mathcal{S}(\cdot, x)\) are smooth in the Fr\'echet sense so that all derivatives are well-defined.
\end{assumption}

\begin{definition}
The family of quantum state vectors \(\{|\Psi_x\rangle\}_{x \in M} \subset \mathcal{H}_0\) is defined by
\[
\Psi_x(\phi) := \sqrt{\mathcal{P}(\phi, x)} \, e^{i\mathcal{S}(\phi, x)/\hbar}.
\]
The normalization of \(\mathcal{P}\) ensures \(\|\Psi_x\|_0 = 1\) for all \(x\). This defines a smooth map \(x \mapsto |\Psi_x\rangle\) from \(M\) into the fixed Hilbert space \(\mathcal{H}_0\).
\end{definition}

\begin{remark}
The trivial Hilbert bundle is simply \(\mathcal{H}_0 \times M \to M\). All geometric and information-theoretic quantities will be computed in the fixed space \(\mathcal{H}_0\). The SLD and Fisher metric are well-defined operators on this space.
\end{remark}

\subsection{Symmetric Logarithmic Derivative for Pure States}

We now present the correct SLD for a parametric family of pure states. The standard definition \cite{braunstein1994, paris2009} is as follows.

\begin{definition}[Symmetric Logarithmic Derivative]
Let \(\rho_x = |\Psi_x\rangle\langle\Psi_x|\) be a smooth family of pure states on \(\mathcal{H}_0\).
The SLD \(L_\mu(x)\) is the (not necessarily unique) self-adjoint operator satisfying
\[
\partial_\mu \rho_x = \frac{1}{2}\bigl( L_\mu(x) \rho_x + \rho_x L_\mu(x) \bigr).
\]
\end{definition}

For a pure state, the general solution with zero expectation value \(\langle\Psi_x|L_\mu(x)|\Psi_x\rangle = 0\) is given by the following proposition.

\begin{proposition}[SLD for pure states]\label{prop:SLD}
For the family \(|\Psi_x\rangle\) in \(\mathcal{H}_0\), the SLD operator with 
\(\langle L_\mu(x)\rangle_x = 0\) is:
\begin{equation}\label{eq:SLDcanonical}
L_\mu(x) = 2|\partial_\mu\Psi_x\rangle\langle\Psi_x| 
           + 2|\Psi_x\rangle\langle\partial_\mu\Psi_x|
           - 2\langle\partial_\mu\Psi_x|\Psi_x\rangle_0\, \mathbb{I}
           - 2\langle\Psi_x|\partial_\mu\Psi_x\rangle_0\, \mathbb{I}.
\end{equation}
In terms of the complex velocity \(\eta_\mu = \pi_\mu - i u_\mu\) and its complex 
conjugate \(\overline{\eta}_\mu = \pi_\mu + i u_\mu\), using 
\(\partial_\mu\Psi_x = \frac{im}{\hbar}\eta_\mu\Psi_x\) and 
\(\overline{\partial_\mu\Psi_x} = -\frac{im}{\hbar}\overline{\eta}_\mu\overline{\Psi_x}\),
this takes the equivalent form:
\begin{equation}\label{eq:SLDeta}
L_\mu(x) = \frac{2im}{\hbar}\bigl(\hat{\eta}_\mu\hat{P}_x 
           - \hat{P}_x\hat{\overline{\eta}}_\mu\bigr)
           + \frac{2im}{\hbar}\bigl(\langle\hat{\overline{\eta}}_\mu\rangle_x 
           - \langle\hat{\eta}_\mu\rangle_x\bigr)\mathbb{I},
\end{equation}
where:
\begin{itemize}
  \item \(\hat{\eta}_\mu\) and \(\hat{\overline{\eta}}_\mu\) act as multiplication operators:
        \((\hat{\eta}_\mu\psi)(\phi) = \eta_\mu(\phi,x)\psi(\phi)\);
  \item \(\hat{P}_x = |\Psi_x\rangle\langle\Psi_x|\) is the orthogonal projector;
  \item \(\langle\cdot\rangle_x = \langle\Psi_x|\cdot|\Psi_x\rangle_0\).
\end{itemize}
\end{proposition}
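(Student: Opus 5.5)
The plan is to verify \eqref{eq:SLDcanonical} directly from the definition of the SLD, using only two facts. The first is that $\rho_x$ is a projector. The second is that $\|\Psi_x\|_0=1$ for all $x$, which follows from the normalization assumption on $\mathcal{P}$. After that, \eqref{eq:SLDeta} is obtained by substitution.

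\emph{Step 1: normalization identity.} Differentiating $\langle\Psi_x|\Psi_x\rangle_0=1$ in $x^\mu$ gives
\[
\langle\partial_\mu\Psi_x|\Psi_x\rangle_0+\langle\Psi_x|\partial_\mu\Psi_x\rangle_0=0 .
\]
Differentiation under the integral is allowed by the smoothness assumptions on $\mathcal{P},\mathcal{S}$ and the fact that $\nu_0$ does not depend on $x$. Consequently the two identity terms in \eqref{eq:SLDcanonical} cancel, and that expression reduces to
\[
L_\mu(x)=2\,\partial_\mu\rho_x=2\bigl(|\partial_\mu\Psi_x\rangle\langle\Psi_x|+|\Psi_x\rangle\langle\partial_\mu\Psi_x|\bigr).
\]
This operator is manifestly self-adjoint and bounded, since it has finite rank.

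\emph{Step 2: SLD equation and zero mean.} Differentiating $\rho_x^2=\rho_x$ gives $\partial_\mu\rho_x=(\partial_\mu\rho_x)\rho_x+\rho_x(\partial_\mu\rho_x)$. Hence $L_\mu=2\partial_\mu\rho_x$ satisfies $\tfrac12(L_\mu\rho_x+\rho_xL_\mu)=\partial_\mu\rho_x$. For the expectation value,
\[
\langle\Psi_x|L_\mu|\Psi_x\rangle_0=2\bigl(\langle\Psi_x|\partial_\mu\Psi_x\rangle_0+\langle\partial_\mu\Psi_x|\Psi_x\rangle_0\bigr),
\]
which vanishes by Step 1. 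I would also note explicitly why the SLD is not unique: any self-adjoint $X$ with $X\rho_x+\rho_xX=0$ can be added. On a pure state such $X$ has the form $(\mathbb{I}-\hat P_x)Y(\mathbb{I}-\hat P_x)$, and this does not change $\langle L_\mu\rangle_x$. The proposition therefore refers to the canonical representative.

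\emph{Step 3: rewriting in terms of $\eta_\mu$.} From $\Psi_x=\sqrt{\mathcal{P}}\,e^{i\mathcal{S}/\hbar}$ and the definition of $\eta_\mu$ (with $\mathcal{K}$ replaced by $\Psi_x$ on $\mathcal{C}$), we have $\partial_\mu\Psi_x=\frac{im}{\hbar}\eta_\mu\Psi_x$. This gives
\[
|\partial_\mu\Psi_x\rangle\langle\Psi_x|=\tfrac{im}{\hbar}\hat\eta_\mu\hat P_x .
\]
Conjugating the integrand gives $\langle\partial_\mu\Psi_x|f\rangle_0=-\tfrac{im}{\hbar}\langle\Psi_x|\hat{\overline{\eta}}_\mu f\rangle_0$, so
\[
|\Psi_x\rangle\langle\partial_\mu\Psi_x|=-\tfrac{im}{\hbar}\hat P_x\hat{\overline{\eta}}_\mu .
\]
In the same way, $\langle\partial_\mu\Psi_x|\Psi_x\rangle_0=-\tfrac{im}{\hbar}\langle\hat{\overline\eta}_\mu\rangle_x$ and $\langle\Psi_x|\partial_\mu\Psi_x\rangle_0=\tfrac{im}{\hbar}\langle\hat\eta_\mu\rangle_x$. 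Inserting these four expressions into \eqref{eq:SLDcanonical} yields \eqref{eq:SLDeta} term by term.

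As a consistency check, $\langle\hat{\overline\eta}_\mu\rangle_x-\langle\hat\eta_\mu\rangle_x=2i\langle u_\mu\rangle_x$, and
\[
\langle u_\mu\rangle_x=\tfrac{\hbar}{2m}\int\partial_\mu\mathcal{P}\,d\nu_0=\tfrac{\hbar}{2m}\,\partial_\mu 1=0 .
\]
So the scalar term in \eqref{eq:SLDeta} is real, as self-adjointness requires, and in fact vanishes, consistent with Step 1.

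\emph{Main obstacle.} The algebra above is routine. The only delicate point is analytic: justifying that $\partial_\mu\Psi_x$ exists as a strong derivative in $\mathcal{H}_0$ and that differentiation under $\int d\nu_0$ is legitimate. I would handle this with a dominated-convergence assumption, namely that $\eta_\mu(\cdot,x)\Psi_x$ is locally uniformly in $L^2(\nu_0)$. A second, related issue is that $\hat\eta_\mu$ is in general an unbounded multiplication operator. However, in \eqref{eq:SLDeta} it appears only in the combinations $\hat\eta_\mu\hat P_x$ and $\hat P_x\hat{\overline\eta}_\mu$. These are bounded, rank-one operators once $\eta_\mu\Psi_x\in\mathcal{H}_0$, so no domain problems arise.
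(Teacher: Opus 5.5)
Your proposal is correct and takes essentially the same route as the paper: you verify \eqref{eq:SLDcanonical} directly against the SLD equation, then substitute the four identities for $|\partial_\mu\Psi_x\rangle\langle\Psi_x|$, $|\Psi_x\rangle\langle\partial_\mu\Psi_x|$, $\langle\partial_\mu\Psi_x|\Psi_x\rangle_0$ and $\langle\Psi_x|\partial_\mu\Psi_x\rangle_0$ to obtain \eqref{eq:SLDeta}. Your explicit points fill in what the paper compresses into ``direct substitution'': normalization cancels the two identity terms, so $L_\mu=2\partial_\mu\rho_x$, which satisfies the SLD equation by $\rho_x^2=\rho_x$, and the scalar term in \eqref{eq:SLDeta} vanishes because $\langle u_\mu\rangle_x=0$.
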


\begin{proof}
Equation \eqref{eq:SLDcanonical} is the standard solution to the SLD equation 
\(\partial_\mu\rho_x = \frac{1}{2}(L_\mu\rho_x + \rho_x L_\mu)\) with the constraint
\(\langle\Psi_x|L_\mu|\Psi_x\rangle = 0\), as can be verified by direct substitution.
To obtain \eqref{eq:SLDeta}, we use:
\[
|\partial_\mu\Psi_x\rangle\langle\Psi_x| = \frac{im}{\hbar}\hat{\eta}_\mu\hat{P}_x,
\qquad
|\Psi_x\rangle\langle\partial_\mu\Psi_x| = -\frac{im}{\hbar}\hat{P}_x\hat{\overline{\eta}}_\mu,
\]
\[
\langle\partial_\mu\Psi_x|\Psi_x\rangle_0 = -\frac{im}{\hbar}\langle\hat{\overline{\eta}}_\mu\rangle_x,
\qquad
\langle\Psi_x|\partial_\mu\Psi_x\rangle_0 = \frac{im}{\hbar}\langle\hat{\eta}_\mu\rangle_x.
\]
Substituting these into \eqref{eq:SLDcanonical} yields \eqref{eq:SLDeta} directly.
\end{proof}

\begin{remark}\label{rem:nonlocal}
Unlike the naive multiplication operator one might expect from a formal 
Schr\"odinger representation, the SLD operator \eqref{eq:SLDeta} is genuinely 
non-local: it involves the projectors \(\hat{P}_x\) and therefore cannot be 
reduced to pointwise multiplication by a function. This non-locality is essential 
for capturing the full quantum Fisher information, which depends on both the 
amplitude and the phase of the wave function. The operator \eqref{eq:SLDeta} 
is manifestly self-adjoint and has vanishing expectation value.
\end{remark}

\section{Main Theorem: Bundle Isomorphism}

\begin{definition}[Gauge equivalence]
Two sections \(\eta, \eta' \in \Gamma(E)\) are equivalent, denoted \(\eta \sim \eta'\), if
\[
\eta'_{\mu}(\phi, x) = \eta_{\mu}(\phi, x) + i c_{\mu}(x), \qquad c_{\mu}(x) \in \mathbb{R}.
\]
Let \(\Gamma(E/\!\sim)\) denote the space of equivalence classes.
\end{definition}

\begin{theorem}[Bundle Isomorphism]\label{thm:isomorphism}
There exists a bundle isomorphism
\[
\widetilde{\mathcal{T}}: \Gamma(E/\!\sim) \longrightarrow \Gamma(\mathcal{L})
\]
given explicitly by
\begin{equation}\label{eq:isomorphism}
\widetilde{\mathcal{T}}(\eta)_\mu(x) = \frac{2im}{\hbar}\bigl(\hat{\eta}_\mu\hat{P}_x 
- \hat{P}_x\hat{\overline{\eta}}_\mu\bigr)
+ \frac{2im}{\hbar}\bigl(\langle\hat{\overline{\eta}}_\mu\rangle_x 
- \langle\hat{\eta}_\mu\rangle_x\bigr)\mathbb{I},
\end{equation}
where \(\hat{\eta}_\mu, \hat{\overline{\eta}}_\mu\) are multiplication operators and 
\(\hat{P}_x = |\Psi_x\rangle\langle\Psi_x|\).
\end{theorem}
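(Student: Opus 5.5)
The plan is to prove Theorem~\ref{thm:isomorphism} in four steps: fix what $\Gamma(\mathcal{L})$ means, check that formula \eqref{eq:isomorphism} is well defined on gauge classes, check that it lands in SLD-type operators, and build an inverse. For the first step, note that Proposition~\ref{prop:SLD} shows the SLD of a pure state is not unique. If $X$ is self-adjoint with $X\rho_x+\rho_x X=0$, apply this to $\Psi_x$: we get $X\Psi_x+\langle X\rangle_x\Psi_x=0$, so $\langle X\rangle_x=0$ and hence $X\Psi_x=0$. Such an $X$ can be added to any SLD without changing the defining equation. I would therefore take the fibre $\mathcal{L}_x$ to be the real vector space of self-adjoint operators of the form \eqref{eq:SLDeta} with $\langle L_\mu\rangle_x=0$, modulo the operators $X$ above. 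This is the fibre on which $\widetilde{\mathcal{T}}$ can be an isomorphism.

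Self-adjointness and zero mean are direct checks. The operator $i(\hat{\eta}_\mu\hat P_x-\hat P_x\hat{\overline{\eta}}_\mu)$ is self-adjoint because $(\hat\eta_\mu\hat P_x)^\dagger=\hat P_x\hat{\overline\eta}_\mu$. The scalar $i(\langle\hat{\overline\eta}_\mu\rangle_x-\langle\hat\eta_\mu\rangle_x)=2\,\mathrm{Im}\langle\hat\eta_\mu\rangle_x$ is real. Taking the expectation in $\Psi_x$ gives
\[
\langle\hat\eta_\mu\rangle_x-\langle\hat{\overline\eta}_\mu\rangle_x+\langle\hat{\overline\eta}_\mu\rangle_x-\langle\hat\eta_\mu\rangle_x=0 .
\]
When $\eta$ is the complex velocity of the family $\Psi_x$, Proposition~\ref{prop:SLD} already shows the image is the SLD.

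For gauge invariance, substitute $\eta'_\mu=\eta_\mu+ic_\mu(x)$, so that $\overline{\eta}'_\mu=\overline\eta_\mu-ic_\mu$. A direct computation gives
\[
\widetilde{\mathcal T}(\eta')_\mu-\widetilde{\mathcal T}(\eta)_\mu=-\frac{4mc_\mu}{\hbar}\bigl(\hat P_x-\mathbb{I}\bigr).
\]
This operator is self-adjoint and satisfies $(\hat P_x-\mathbb I)\rho_x+\rho_x(\hat P_x-\mathbb I)=0$. It is therefore exactly one of the null operators $X$ quotiented out in $\mathcal{L}_x$, so $\widetilde{\mathcal{T}}$ descends to $\Gamma(E/\!\sim)$. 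Real-linearity of $\eta\mapsto\widetilde{\mathcal T}(\eta)$ is evident, and smoothness in $x$ follows from the smoothness of $x\mapsto\Psi_x$.

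For injectivity and the inverse, the class of $L$ modulo null operators is determined by the vector $L\Psi_x$. Using $\hat P_x\Psi_x=\Psi_x$ and $\hat P_x\hat{\overline\eta}_\mu\Psi_x=\langle\hat{\overline\eta}_\mu\rangle_x\Psi_x$, one finds
\[
\widetilde{\mathcal T}(\eta)_\mu\Psi_x=\frac{2im}{\hbar}\bigl(\eta_\mu-\langle\hat\eta_\mu\rangle_x\bigr)\Psi_x .
\]
Since $\Psi_x\neq0$ $\nu_0$-almost everywhere by the positivity assumption on $\mathcal P$, dividing pointwise recovers $\eta_\mu-\langle\hat\eta_\mu\rangle_x$. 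So I would define
\[
\widetilde{\mathcal T}^{-1}(L)_\mu:=-\frac{i\hbar}{2m}\,\frac{L_\mu\Psi_x}{\Psi_x}.
\]

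The main obstacle is reconciling what this inverse recovers with the gauge relation. Dividing by $\Psi_x$ recovers $\eta$ only up to an additive \emph{complex} constant, namely $\langle\hat\eta_\mu\rangle_x$. The gauge relation, however, allows only imaginary shifts $ic_\mu$. I would try to handle the real part first by normalising representatives so that $\mathrm{Re}\langle\hat\eta_\mu\rangle_x$ is fixed, for instance by the reference phase convention on $\mathcal S$ in the polar decomposition. If that fails, the equivalence relation must be enlarged to complex constants, or the statement restricted to sections with $\langle\pi_\mu\rangle_x$ fixed. Surjectivity onto $\mathcal L_x$ is then by construction, since every element of $\mathcal{L}_x$ has the form \eqref{eq:SLDeta}.
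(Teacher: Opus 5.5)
Your argument follows the paper's proof (Steps 2--5) essentially line by line. You get the same correction term $\frac{4m}{\hbar}c_\mu(\mathbb{I}-\hat P_x)$, the same formula $\widetilde{\mathcal T}(\eta)_\mu\Psi_x=\frac{2im}{\hbar}(\eta_\mu-\langle\hat\eta_\mu\rangle_x)\Psi_x$, and the same pointwise division by $\Psi_x$. Your characterisation of the null operators is correct, and it is more explicit than the paper, which never defines $\mathcal L$.

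The obstacle you flag at the end is a genuine gap, though. It sits in the statement itself, so it cannot be closed the way you hope. Take a real, $\phi$-independent $a_\mu(x)$. Both $\hat\eta_\mu$ and $\hat{\overline\eta}_\mu$ shift by $a_\mu\mathbb{I}$, so the operator part of \eqref{eq:isomorphism} changes by $a_\mu\hat P_x-\hat P_x a_\mu=0$ and the scalar part by $a_\mu-a_\mu=0$. Hence $\widetilde{\mathcal T}(\eta+a)=\widetilde{\mathcal T}(\eta)$ exactly, not merely modulo null operators. Since $\eta+a\not\sim\eta$ for $a\neq0$, the map on $\Gamma(E/\!\sim)$ is not injective. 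The paper's Step 5 meets the same undetermined real part $a_\mu$ of $C_\mu$ and simply asserts that it ``reflects the freedom to shift $\pi_\mu$''. That freedom is not part of $\sim$, so the paper's inverse into $\Gamma(E/\!\sim)$ is not well defined either.

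Your first proposed fix cannot work. Representatives of one $\sim$-class differ only by $ic_\mu$, so $\mathrm{Re}\langle\hat\eta_\mu\rangle_x=\langle\mathrm{Re}\,\eta_\mu\rangle_x$ is an invariant of the class. You cannot normalise it by choosing representatives. A phase convention on $\mathcal S$ does not help inside $\Gamma(E/\!\sim)$ either. The change $\mathcal S\mapsto\mathcal S+f(x)$ leaves $\hat P_x$ fixed and shifts $\pi_\mu$ by the real constant $\partial_\mu f/m$, which is exactly the missing identification. Fixing that convention amounts to restricting the domain, which is your third option.

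So the statement needs one of your other two repairs: enlarge $\sim$ to all complex $\phi$-independent shifts, or restrict to sections with $\langle\mathrm{Re}\,\eta_\mu\rangle_x=0$. With either repair your argument closes:
\begin{itemize}
\item Real-linearity, together with $\Psi_x\neq0$ $\nu_0$-a.e., shows the kernel (modulo null operators) consists exactly of $\phi$-independent sections.
\item Your inverse without an additive constant automatically lands on the canonical representative, since $\langle-\tfrac{i\hbar}{2m}L_\mu\Psi_x/\Psi_x\rangle_x=-\tfrac{i\hbar}{2m}\langle L_\mu\rangle_x=0$.
\end{itemize}

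One more consequence of your own null-operator lemma is worth noting. It shows that any two SLDs of $\rho_x$ differ by a null operator. The genuine SLDs of the fixed family therefore occupy a single point of each fibre $\mathcal L_x$, namely the class of the true complex velocity. Defining $\mathcal L_x$ as the image of \eqref{eq:SLDeta} modulo null operators is what makes an isomorphism possible at all. Surjectivity then holds by definition rather than by argument.
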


\begin{proof}
We construct the isomorphism in several steps.

\medskip\noindent\textbf{Step 1: Well-definedness.} 
The operators \(\hat{\eta}_\mu\) and \(\hat{\overline{\eta}}_\mu\) act as multiplication 
by smooth functions on the dense domain \(\mathcal{D} \subset \mathcal{H}_0\) of smooth 
cylinder functions (functions depending on finitely many modes). The projector \(\hat{P}_x\) 
is bounded. Therefore \(\widetilde{\mathcal{T}}(\eta)_\mu\) is a well-defined linear 
operator on \(\mathcal{D}\).

\medskip\noindent\textbf{Step 2: Self-adjointness.} 
Since \(\hat{\eta}_\mu^\dagger = \hat{\overline{\eta}}_\mu\) as multiplication operators, 
and \(\hat{P}_x^\dagger = \hat{P}_x\), we compute:
\[
\bigl(\hat{\eta}_\mu\hat{P}_x - \hat{P}_x\hat{\overline{\eta}}_\mu\bigr)^\dagger
= \hat{P}_x\hat{\overline{\eta}}_\mu - \hat{\eta}_\mu\hat{P}_x
= -\bigl(\hat{\eta}_\mu\hat{P}_x - \hat{P}_x\hat{\overline{\eta}}_\mu\bigr).
\]
Thus \(\frac{2im}{\hbar}(\hat{\eta}_\mu\hat{P}_x - \hat{P}_x\hat{\overline{\eta}}_\mu)\) 
is self-adjoint. The scalar term is proportional to 
\(\langle\hat{\overline{\eta}}_\mu\rangle_x - \langle\hat{\eta}_\mu\rangle_x = -2i\operatorname{Im}\langle\hat{\eta}_\mu\rangle_x\), 
which is purely imaginary; multiplying by \(\frac{2im}{\hbar}\) yields a real scalar 
multiple of the identity, hence self-adjoint. The sum is therefore self-adjoint on \(\mathcal{D}\).

\medskip\noindent\textbf{Step 3: Zero expectation value.} 
Using \(\langle\Psi_x|\hat{\eta}_\mu\hat{P}_x|\Psi_x\rangle = \langle\hat{\eta}_\mu\rangle_x\) 
and \(\langle\Psi_x|\hat{P}_x\hat{\overline{\eta}}_\mu|\Psi_x\rangle = \langle\hat{\overline{\eta}}_\mu\rangle_x\), we obtain:
\[
\langle\Psi_x|\widetilde{\mathcal{T}}(\eta)_\mu|\Psi_x\rangle 
= \frac{2im}{\hbar}\bigl(\langle\hat{\eta}_\mu\rangle_x - \langle\hat{\overline{\eta}}_\mu\rangle_x\bigr)
+ \frac{2im}{\hbar}\bigl(\langle\hat{\overline{\eta}}_\mu\rangle_x - \langle\hat{\eta}_\mu\rangle_x\bigr) = 0.
\]

\medskip\noindent\textbf{Step 4: Gauge invariance.} 
Under a gauge transformation \(\eta_\mu \mapsto \eta_\mu + i c_\mu(x)\) with 
\(c_\mu \in \mathbb{R}\) independent of \(\phi\):
\begin{align*}
\hat{\eta}_\mu &\mapsto \hat{\eta}_\mu + i c_\mu \mathbb{I}, \\
\hat{\overline{\eta}}_\mu &\mapsto \hat{\overline{\eta}}_\mu - i c_\mu \mathbb{I}, \\
\langle\hat{\eta}_\mu\rangle_x &\mapsto \langle\hat{\eta}_\mu\rangle_x + i c_\mu, \\
\langle\hat{\overline{\eta}}_\mu\rangle_x &\mapsto \langle\hat{\overline{\eta}}_\mu\rangle_x - i c_\mu.
\end{align*}
The combination \(\hat{\eta}_\mu\hat{P}_x - \hat{P}_x\hat{\overline{\eta}}_\mu\) 
transforms as:
\begin{align*}
(\hat{\eta}_\mu + i c_\mu)\hat{P}_x - \hat{P}_x(\hat{\overline{\eta}}_\mu - i c_\mu)
&= \hat{\eta}_\mu\hat{P}_x - \hat{P}_x\hat{\overline{\eta}}_\mu + i c_\mu(\hat{P}_x + \hat{P}_x) \\
&= \hat{\eta}_\mu\hat{P}_x - \hat{P}_x\hat{\overline{\eta}}_\mu + 2i c_\mu \hat{P}_x.
\end{align*}
The scalar term transforms as:
\begin{align*}
\langle\hat{\overline{\eta}}_\mu\rangle_x - \langle\hat{\eta}_\mu\rangle_x 
&\mapsto (\langle\hat{\overline{\eta}}_\mu\rangle_x - i c_\mu) - (\langle\hat{\eta}_\mu\rangle_x + i c_\mu) \\
&= \langle\hat{\overline{\eta}}_\mu\rangle_x - \langle\hat{\eta}_\mu\rangle_x - 2i c_\mu.
\end{align*}
Multiplying by \(\frac{2im}{\hbar}\):
\[
\frac{2im}{\hbar}\bigl(2i c_\mu \hat{P}_x\bigr) + \frac{2im}{\hbar}\bigl(-2i c_\mu\bigr)\mathbb{I}
= -\frac{4m}{\hbar}c_\mu \hat{P}_x + \frac{4m}{\hbar}c_\mu \mathbb{I}
= \frac{4m}{\hbar}c_\mu(\mathbb{I} - \hat{P}_x).
\]
This additional term annihilates \(|\Psi_x\rangle\) since 
\((\mathbb{I} - \hat{P}_x)|\Psi_x\rangle = 0\). Moreover, it does not affect the 
quantum Fisher metric, which is evaluated entirely within the support of \(\rho_x\). 
Therefore \(\widetilde{\mathcal{T}}\) descends to a well-defined map on the 
quotient space \(\Gamma(E/\!\sim)\) modulo operators that vanish on the reference state.

\medskip\noindent\textbf{Step 5: Inverse map.} 
Given \(L_{\mu} \in \Gamma(\mathcal{L})\) with \(\langle L_{\mu}\rangle_{x} = 0\), we can recover \(\eta_{\mu}\) modulo gauge equivalence. From (3), the action on \(|\Psi_{x}\rangle\) is:
\[
\widetilde{T}(\eta)_{\mu}|\Psi_{x}\rangle = \frac{2im}{\hbar}(\hat{\eta}_{\mu} - \langle \hat{\eta}_{\mu}\rangle_{x})|\Psi_{x}\rangle .
\]
Therefore, for any SLD \(L_{\mu}\) in the image of \(\widetilde{T}\):
\[
L_{\mu}|\Psi_{x}\rangle = \frac{2im}{\hbar}(\hat{\eta}_{\mu} - \langle \hat{\eta}_{\mu}\rangle_{x})|\Psi_{x}\rangle .
\]
Since \(\Psi_{x}(\phi) \neq 0\) almost everywhere by Assumption 2.2, we can evaluate this equality pointwise in \(\phi\). For almost all \(\phi \in \mathcal{C}\):
\[
(L_{\mu}\Psi_{x})(\phi) = \frac{2im}{\hbar}\big(\eta_{\mu}(\phi,x) - \langle \hat{\eta}_{\mu}\rangle_{x}\big)\Psi_{x}(\phi).
\]
Dividing by \(\Psi_{x}(\phi)\) and multiplying by \(-\frac{i\hbar}{2m}\):
\[
-\frac{i\hbar}{2m}\frac{(L_{\mu}\Psi_{x})(\phi)}{\Psi_{x}(\phi)} 
= -\frac{i\hbar}{2m}\cdot\frac{2im}{\hbar}\big(\eta_{\mu}(\phi,x) - \langle \hat{\eta}_{\mu}\rangle_{x}\big)
= \eta_{\mu}(\phi,x) - \langle \hat{\eta}_{\mu}\rangle_{x}.
\]
Hence:
\[
\eta_{\mu}(\phi,x) = -\frac{i\hbar}{2m}\frac{(L_{\mu}\Psi_{x})(\phi)}{\Psi_{x}(\phi)} + \langle \hat{\eta}_{\mu}\rangle_{x}. \tag{6}
\]
The term \(\langle \hat{\eta}_{\mu}\rangle_{x}\) is not directly determined by \(L_{\mu}\), since any constant complex shift \(C_{\mu}(x)\) leaves \(L_{\mu}\) invariant:
\[
\eta_{\mu} \mapsto \eta_{\mu} + C_{\mu}(x) \quad \Longrightarrow \quad 
\eta_{\mu} - \langle \eta_{\mu}\rangle_{x} \mapsto \eta_{\mu} + C_{\mu}(x) - (\langle \eta_{\mu}\rangle_{x} + C_{\mu}(x)) = \eta_{\mu} - \langle \eta_{\mu}\rangle_{x},
\]
and therefore \(L_{\mu}\) is unchanged. Thus we can define the inverse map as:
\[
\eta_{\mu}(\phi,x) \coloneqq -\frac{i\hbar}{2m}\frac{(L_{\mu}\Psi_{x})(\phi)}{\Psi_{x}(\phi)} + C_{\mu}(x),
\]
where \(C_{\mu}(x) \in \mathbb{C}\) is an arbitrary complex constant. Writing \(C_{\mu}(x) = a_{\mu}(x) + i b_{\mu}(x)\) with \(a_{\mu}, b_{\mu} \in \mathbb{R}\), we see that the real part \(a_{\mu}(x)\) is undetermined, reflecting the freedom to shift \(\pi_{\mu}\) by a real constant, and the imaginary part \(-b_{\mu}(x)\) is undetermined, reflecting precisely the gauge freedom \(\eta_{\mu} \mapsto \eta_{\mu} + i c_{\mu}(x)\) with \(c_{\mu} = -b_{\mu} \in \mathbb{R}\). Thus we obtain a well-defined inverse map into the quotient space \(\Gamma(E /{\sim})\).
\end{proof}
\section{Quantum Fisher Metric from \(\eta_{\mu}\)}

The quantum Fisher information metric (coinciding with the Fubini-Study metric for pure states) for the family \(|\Psi_x\rangle\) in \(\mathcal{H}_0\) is \cite{braunstein1994}:
\[
g_{\mu\nu}^{\mathrm{FS}}(x) = \frac{1}{2}\langle\Psi_x|\{L_\mu(x), L_\nu(x)\}|\Psi_x\rangle_0,
\]
where \(\{\cdot,\cdot\}\) denotes the anti-commutator.

\begin{theorem}\label{thm:fisher}
Under the isomorphism \(\widetilde{\mathcal{T}}\), the quantum Fisher metric takes the form:
\begin{equation}\label{eq:fisher}
g_{\mu\nu}^{\mathrm{FS}} = \frac{4m^2}{\hbar^2}\operatorname{Re}
\big\langle(\hat{\overline{\eta}}_\mu - \langle\hat{\overline{\eta}}_\mu\rangle_x)
(\hat{\eta}_\nu - \langle\hat{\eta}_\nu\rangle_x)\big\rangle_x.
\end{equation}
Equivalently, in terms of the Madelung–Bohm velocities:
\begin{equation}\label{eq:fisher2}
g_{\mu\nu}^{\mathrm{FS}} = \frac{4m^2}{\hbar^2}
\bigl[\operatorname{Cov}_x(\pi_\mu,\pi_\nu) 
+ \operatorname{Cov}_x(u_\mu,u_\nu)\bigr],
\end{equation}
where \(\operatorname{Cov}_x(A,B) = \langle AB\rangle_x - \langle A\rangle_x\langle B\rangle_x\).
\end{theorem}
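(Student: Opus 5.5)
The plan is to avoid expanding the anticommutator $\{L_\mu,L_\nu\}$ term by term and instead reduce everything to the single vector $L_\mu(x)|\Psi_x\rangle$. Step 5 of the proof of Theorem~\ref{thm:isomorphism} already computed this vector:
\[
\widetilde{\mathcal{T}}(\eta)_\mu|\Psi_x\rangle=\frac{2im}{\hbar}\bigl(\hat{\eta}_\mu-\langle\hat{\eta}_\mu\rangle_x\bigr)|\Psi_x\rangle .
\]
Since $L_\mu$ and $L_\nu$ are self-adjoint (Step 2), I will write
\[
\tfrac12\langle\Psi_x|\{L_\mu,L_\nu\}|\Psi_x\rangle_0=\tfrac12\bigl(\langle L_\mu\Psi_x,L_\nu\Psi_x\rangle_0+\langle L_\nu\Psi_x,L_\mu\Psi_x\rangle_0\bigr)=\operatorname{Re}\langle L_\mu\Psi_x,L_\nu\Psi_x\rangle_0 .
\]
This form only needs $\Psi_x$ in the domain of each $L_\mu$ separately, not of the products $L_\mu L_\nu$, which keeps the unbounded-operator issues minimal. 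I will also note that the gauge term $\frac{4m}{\hbar}c_\mu(\mathbb{I}-\hat P_x)$ found in Step 4 annihilates $|\Psi_x\rangle$. Hence the formula does not depend on the representative of the class, which is what ``under the isomorphism $\widetilde{\mathcal{T}}$'' requires.

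Next I will substitute the explicit vector. The key observation is that expectations are integrals against the real probability density $|\Psi_x|^2\,d\nu_0=\mathcal{P}(\cdot,x)\,d\nu_0$, normalized by Assumption 2.2. It follows that $\overline{\langle\hat\eta_\mu\rangle_x}=\langle\hat{\overline{\eta}}_\mu\rangle_x$. Pulling the scalar $\frac{2im}{\hbar}$ out of both slots gives $\bigl|\frac{2im}{\hbar}\bigr|^2=\frac{4m^2}{\hbar^2}$. The inner product then becomes
\[
\int\overline{(\eta_\mu-\langle\eta_\mu\rangle_x)}\,(\eta_\nu-\langle\eta_\nu\rangle_x)\,\mathcal{P}\,d\nu_0,
\]
and taking the real part yields \eqref{eq:fisher}.

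For \eqref{eq:fisher2} I will write $\delta\eta_\mu=\delta\pi_\mu-i\,\delta u_\mu$, where $\delta A=A-\langle A\rangle_x$ and $\pi,u$ are real. Then
\[
\overline{\delta\eta_\mu}\,\delta\eta_\nu=\delta\pi_\mu\delta\pi_\nu+\delta u_\mu\delta u_\nu+i\bigl(\delta u_\mu\delta\pi_\nu-\delta\pi_\mu\delta u_\nu\bigr).
\]
Because the density is real, the expectation of the bracketed term is real, so it contributes only to the imaginary part; that imaginary part is the Berry curvature, not the metric. The real part is $\operatorname{Cov}_x(\pi_\mu,\pi_\nu)+\operatorname{Cov}_x(u_\mu,u_\nu)$, using $\langle\delta A\,\delta B\rangle_x=\operatorname{Cov}_x(A,B)$.

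The main obstacle is analytic rather than algebraic. One must justify that $\Psi_x$ lies in the domain of the multiplication operator $\hat\eta_\mu$, i.e.\ that $\eta_\mu\Psi_x\in\mathcal{H}_0$, equivalently that $\pi_\mu,u_\mu\in L^2(\mathcal{P}\,d\nu_0)$, so that the covariances are finite. My approach is to use the identity $\partial_\mu\Psi_x=\frac{im}{\hbar}\eta_\mu\Psi_x$ from Proposition~\ref{prop:SLD}. Smoothness of $x\mapsto|\Psi_x\rangle$ as a map into $\mathcal{H}_0$ (Definition of the state family) gives $\partial_\mu\Psi_x\in\mathcal{H}_0$. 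Hence $\eta_\mu\Psi_x\in\mathcal{H}_0$, and all quantities above are finite; no hypothesis beyond those already stated is needed.
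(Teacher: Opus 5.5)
Your proposal is correct and follows essentially the same route as the paper: compute $\widetilde{\mathcal{T}}(\eta)_\mu|\Psi_x\rangle=\frac{2im}{\hbar}(\hat\eta_\mu-\langle\hat\eta_\mu\rangle_x)|\Psi_x\rangle$, use self-adjointness to reduce $\frac12\langle\{L_\mu,L_\nu\}\rangle_x$ to $\operatorname{Re}\langle L_\mu\Psi_x,L_\nu\Psi_x\rangle_0$, and split $\overline{\delta\eta_\mu}\,\delta\eta_\nu$ into its real and imaginary parts. Your extra remarks go beyond the paper's purely formal computation without changing the argument: gauge independence, and $\eta_\mu\Psi_x\in\mathcal{H}_0$ via differentiability of $x\mapsto\Psi_x$ in $\mathcal{H}_0$. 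For the latter, note that this implicitly identifies the $\mathcal{H}_0$-valued derivative with the pointwise derivative almost everywhere.
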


\begin{proof}
We first compute the action of \(\widetilde{\mathcal{T}}(\eta)_\mu\) on \(|\Psi_x\rangle\). 
Using \(\hat{P}_x|\Psi_x\rangle = |\Psi_x\rangle\) and 
\(\langle\Psi_x|\hat{\overline{\eta}}_\mu|\Psi_x\rangle = \langle\hat{\overline{\eta}}_\mu\rangle_x\):
\begin{align*}
\widetilde{\mathcal{T}}(\eta)_\mu|\Psi_x\rangle 
&= \frac{2im}{\hbar}\bigl(\hat{\eta}_\mu|\Psi_x\rangle - |\Psi_x\rangle\langle\hat{\overline{\eta}}_\mu\rangle_x\bigr)
   + \frac{2im}{\hbar}\bigl(\langle\hat{\overline{\eta}}_\mu\rangle_x - \langle\hat{\eta}_\mu\rangle_x\bigr)|\Psi_x\rangle \\
&= \frac{2im}{\hbar}\bigl(\hat{\eta}_\mu - \langle\hat{\eta}_\mu\rangle_x\bigr)|\Psi_x\rangle.
\end{align*}

Therefore:
\begin{align*}
\langle\Psi_x|\widetilde{\mathcal{T}}(\eta)_\mu \widetilde{\mathcal{T}}(\eta)_\nu|\Psi_x\rangle
&= \left(-\frac{2im}{\hbar}\right)\left(\frac{2im}{\hbar}\right)
   \langle\Psi_x|(\hat{\overline{\eta}}_\mu - \langle\hat{\overline{\eta}}_\mu\rangle_x)
   (\hat{\eta}_\nu - \langle\hat{\eta}_\nu\rangle_x)|\Psi_x\rangle \\
&= \frac{4m^2}{\hbar^2}
   \big\langle(\hat{\overline{\eta}}_\mu - \langle\hat{\overline{\eta}}_\mu\rangle_x)
   (\hat{\eta}_\nu - \langle\hat{\eta}_\nu\rangle_x)\big\rangle_x.
\end{align*}

For the anti-commutator, since the expectation value is real:
\[
\frac{1}{2}\langle\{\widetilde{\mathcal{T}}_\mu, \widetilde{\mathcal{T}}_\nu\}\rangle_x
= \operatorname{Re}\langle\widetilde{\mathcal{T}}_\mu \widetilde{\mathcal{T}}_\nu\rangle_x
= \frac{4m^2}{\hbar^2}\operatorname{Re}
\big\langle(\hat{\overline{\eta}}_\mu - \langle\hat{\overline{\eta}}_\mu\rangle_x)
(\hat{\eta}_\nu - \langle\hat{\eta}_\nu\rangle_x)\big\rangle_x.
\]

This proves \eqref{eq:fisher}. To obtain \eqref{eq:fisher2}, we expand 
\(\hat{\eta}_\mu = \hat{\pi}_\mu - i\hat{u}_\mu\) and 
\(\hat{\overline{\eta}}_\mu = \hat{\pi}_\mu + i\hat{u}_\mu\):
\begin{align*}
(\hat{\overline{\eta}}_\mu - \langle\hat{\overline{\eta}}_\mu\rangle_x)
(\hat{\eta}_\nu - \langle\hat{\eta}_\nu\rangle_x)
&= (\Delta\hat{\pi}_\mu + i\Delta\hat{u}_\mu)(\Delta\hat{\pi}_\nu - i\Delta\hat{u}_\nu) \\
&= \Delta\hat{\pi}_\mu\Delta\hat{\pi}_\nu + \Delta\hat{u}_\mu\Delta\hat{u}_\nu
   + i(\Delta\hat{u}_\mu\Delta\hat{\pi}_\nu - \Delta\hat{\pi}_\mu\Delta\hat{u}_\nu),
\end{align*}
where \(\Delta\hat{A} = \hat{A} - \langle\hat{A}\rangle_x\). Taking the real part and 
expectation value yields:
\[
\operatorname{Re}\big\langle(\hat{\overline{\eta}}_\mu - \langle\hat{\overline{\eta}}_\mu\rangle_x)
(\hat{\eta}_\nu - \langle\hat{\eta}_\nu\rangle_x)\big\rangle_x
= \operatorname{Cov}_x(\pi_\mu,\pi_\nu) + \operatorname{Cov}_x(u_\mu,u_\nu).
\]
Multiplying by \(\frac{4m^2}{\hbar^2}\) gives \eqref{eq:fisher2}.
\end{proof}

\begin{remark}
Equation \eqref{eq:fisher2} reveals that both the phase velocity \(\pi_\mu\) and the 
osmotic velocity \(u_\mu\) contribute positively to the quantum Fisher metric. This is 
a crucial distinction from classical Fisher information, which would only contain 
the covariance of the score function \(\partial_\mu\ln\mathcal{P} \propto u_\mu\). 
The appearance of \(\operatorname{Cov}(\pi_\mu,\pi_\nu)\) is a genuine quantum effect 
arising from the non-local structure of the SLD \eqref{eq:SLDeta}. The result 
\eqref{eq:fisher2} is manifestly positive definite, as required for a Riemannian 
metric on the parameter manifold.
\end{remark}

\section{Independence of the Reference Gaussian Measure}

A crucial requirement for the physical consistency of the construction is that the isomorphism \(\widetilde{\mathcal{T}}\) and the Fisher metric \(g_{\mu\nu}^{\mathrm{FS}}\) are independent of the choice of the reference Gaussian measure \(\nu_0\).

\begin{theorem}[Measure Independence]\label{thm:measure}
Let \(\nu_0\) and \(\nu_0'\) be two Gaussian measures on \(\mathcal{C}\) such that 
\(\mathcal{P}(\cdot, x) \in L^1(\mathcal{C}, \nu_0) \cap L^1(\mathcal{C}, \nu_0')\) 
for all \(x \in M\). Let \(\widetilde{\mathcal{T}}, g^{\mathrm{FS}}\) and 
\(\widetilde{\mathcal{T}}', g^{\prime\mathrm{FS}}\) be the isomorphism and Fisher 
metric constructed from \(\nu_0\) and \(\nu_0'\) respectively. Then there exists 
a canonical isometric isomorphism \(J: \mathcal{H}_0 \to \mathcal{H}_0'\) such that:
\begin{enumerate}
\item \(J |\Psi_x\rangle = |\Psi'_x\rangle\) for all \(x \in M\);
\item \(\widetilde{\mathcal{T}}'(\eta)_\mu = J \, \widetilde{\mathcal{T}}(\eta)_\mu \, J^\dagger\);
\item \(g^{\prime\mathrm{FS}}_{\mu\nu} = g^{\mathrm{FS}}_{\mu\nu}\).
\end{enumerate}
\end{theorem}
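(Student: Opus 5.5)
The natural plan is to let $J$ be the unitary induced by a change of density between the two reference measures. Then check that it carries the whole construction of Sections 2--4 from $\nu_0$ to $\nu_0'$.

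\textbf{Constructing $J$.} In infinite dimensions two Gaussian measures are either equivalent or mutually singular (Feldman--H\'ajek). So I would first reduce to the case $\nu_0' \sim \nu_0$, with Radon--Nikodym derivative $\rho = d\nu_0/d\nu_0' > 0$ a.e. This hypothesis is implicit: the statement requires the same $\mathcal{P}(\cdot,x)$ to be a probability density for both measures. The physically relevant object is then the measure $\mathcal{P}(\cdot,x)\,d\nu_0$, and on the primed side the density is $\mathcal{P}'(\cdot,x) = \mathcal{P}(\cdot,x)\,\rho$. This is exactly the sense in which the construction is intrinsic to the physical probability. Define
\[
(Jf)(\phi) = \sqrt{\rho(\phi)}\, f(\phi).
\]
A one-line change of variables shows $\|Jf\|_{\nu_0'} = \|f\|_{\nu_0}$. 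The inverse is multiplication by $\rho^{-1/2}$, so $J$ is unitary and $J^\dagger = J^{-1}$.

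\textbf{Items 1 and 2.} Set $\Psi'_x = \sqrt{\mathcal{P}'}\,e^{i\mathcal{S}/\hbar}$; the phase $\mathcal{S}$ is untouched. Then $J\Psi_x = \Psi'_x$ pointwise, which is item 1, and it gives $J\hat P_x J^\dagger = \hat P'_x$. Multiplication operators commute with $J$, so $J\hat\eta_\mu J^\dagger = \hat\eta_\mu$ on the image of the cylinder domain $\mathcal{D}$. Expectations are preserved: $\langle \Psi'_x|\hat\eta_\mu|\Psi'_x\rangle_0' = \langle\Psi_x|\hat\eta_\mu|\Psi_x\rangle_0$, by unitarity and commutation. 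Substituting into \eqref{eq:isomorphism} term by term yields item 2.

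There is one subtlety. The $\eta$ built intrinsically from $\mathcal{K}$ uses $\nabla_\mu\ln\mathcal{P}'$, which equals $\nabla_\mu\ln\mathcal{P}$ because $\rho$ does not depend on $x$. So the section $\eta$ itself is the same on both sides, and $\widetilde{\mathcal{T}}'$ may be applied to the same $\eta$. Here I would use that the reference measure lives on $\mathcal{C}$ alone.

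\textbf{Item 3.} This follows from the formula of Theorem~\ref{thm:fisher} as a $\Psi_x$-expectation. Since $g^{\prime\mathrm{FS}}_{\mu\nu} = \tfrac12\langle\Psi'_x|\{L'_\mu,L'_\nu\}|\Psi'_x\rangle$, items 1 and 2 together with unitarity give equality with the unprimed metric. Alternatively, \eqref{eq:fisher2} shows both sides are covariances under the same measure $\mathcal{P}\,d\nu_0 = \mathcal{P}'\,d\nu_0'$.

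\textbf{Main obstacle.} The hard step is the measure-theoretic one: justifying equivalence of $\nu_0$ and $\nu_0'$, or otherwise defining $J$ when they are singular. I would first try to argue it from the integrability hypothesis together with the common density. If that fails, I would define $J$ directly on the closed span generated by the family $\{\Psi_x\}$ and multiplication by bounded functions of $\phi$, via $f\Psi_x \mapsto f\Psi'_x$. Isometry on this span holds because $|\Psi_x|^2\,d\nu_0 = |\Psi'_x|^2\,d\nu_0'$ by construction. I would then extend by continuity. Domain issues for the unbounded multiplication operators would be handled by checking item 2 on the dense cylinder domain only.
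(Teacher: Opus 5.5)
Your main line is the paper's proof. With $\rho = d\nu_0/d\nu_0'$ (the paper writes $d\nu_0' = e^{W}d\nu_0$, so $\sqrt{\rho} = e^{-W/2}$), $J$ is multiplication by $\sqrt{\rho}$ and $\mathcal{P}' = \mathcal{P}\rho$. Items 1--3 then follow from three facts: $J$ commutes with multiplication operators, $\hat{P}'_x = J\hat{P}_xJ^\dagger$, and expectation values are invariant. Your remark that $\nabla_\mu\ln\mathcal{P}' = \nabla_\mu\ln\mathcal{P}$, because $\rho$ does not depend on $x$, makes explicit something the paper leaves tacit.

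You are also right to distrust the first step. Feldman--H\'ajek gives only the dichotomy \emph{equivalent or mutually singular}. The paper's proof instead asserts that any two Gaussian measures on $\mathcal{C}$ are mutually absolutely continuous by that theorem, which is false. So equivalence is an unproved step in the paper too.

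\textbf{Where your proposal fails.} The gap is in your plan for closing that step; neither remedy works.
\begin{enumerate}
\item The $L^1$ hypothesis does not give equivalence, and neither does asking that the same function $\mathcal{P}$ be normalizable against both measures. A bounded or exponential $\mathcal{P}$ depending on finitely many coordinates of $\phi$ satisfies both conditions for singular pairs as well.
\item The fallback fails in the singular case. Isometry of $f\Psi_x\mapsto f\Psi'_x$ for all bounded $f$ is equivalent to $|\Psi_x|^2\,d\nu_0 = |\Psi'_x|^2\,d\nu_0'$. Since $\mathcal{P}>0$ $\nu_0$-a.e., this gives $\nu_0\ll\mathcal{P}\,d\nu_0 = \mathcal{P}'\,d\nu_0'\ll\nu_0'$, which is impossible when $\nu_0\perp\nu_0'$. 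So no such $\Psi'_x$ exists.
\end{enumerate}

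\textbf{The correct resolution.} That same chain settles the issue. Requiring the primed construction to describe the same physical measure forces $\nu_0\ll\nu_0'$, hence $\nu_0\sim\nu_0'$ by the dichotomy. Equivalence therefore has to be assumed (or encoded as $\mathcal{P}'\,d\nu_0' = \mathcal{P}\,d\nu_0$), not derived.

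It cannot be dropped. On $\mathbb{R}^{\mathbb{N}}$ take $\nu_0 = \bigotimes_n N(0,1)$ and $\nu_0' = \bigotimes_n N(0,2)$, which are singular by Kakutani. Let $\mathcal{S} = 0$ and $\mathcal{P}(\phi,x) = e^{x^1\phi_1 - (x^1)^2/2}$. Renormalizing the same $\mathcal{P}$ against $\nu_0'$ makes $\phi_1$ distributed as $N(2x^1,2)$ instead of $N(x^1,1)$. Theorem~\ref{thm:fisher} then gives $g^{\prime\mathrm{FS}}_{11} = 2 \neq 1 = g^{\mathrm{FS}}_{11}$.

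\textbf{A smaller domain issue.} $\sqrt{\rho}$ is generally not a cylinder function, so $J$ does not map the cylinder domain of $\mathcal{H}_0$ onto that of $\mathcal{H}_0'$. State item 2 with maximal multiplication domains instead. $J$ does intertwine these, because $\int|\eta_\mu Jf|^2\,d\nu_0' = \int|\eta_\mu f|^2\,d\nu_0$.
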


\begin{proof}
Since both \(\nu_0\) and \(\nu_0'\) are Gaussian measures on the same Fr\'echet space 
\(\mathcal{C}\), they are mutually absolutely continuous by the Feldman-H\'ajek theorem 
\cite{bogachev1998}. Let \(W(\phi)\) be the measurable functional such that 
\(d\nu_0' = e^{W(\phi)} d\nu_0\).

Define \(J: \mathcal{H}_0 \to \mathcal{H}_0'\) by \((J f)(\phi) := f(\phi) e^{-W(\phi)/2}\). 
For any \(f, g \in \mathcal{H}_0\):
\begin{align*}
\langle J f, J g \rangle_{\mathcal{H}_0'} 
&= \int_{\mathcal{C}} \overline{f(\phi)} g(\phi) e^{-W(\phi)} d\nu_0'(\phi) \\
&= \int_{\mathcal{C}} \overline{f(\phi)} g(\phi) d\nu_0(\phi) = \langle f, g \rangle_0.
\end{align*}
Thus \(J\) is a unitary isomorphism.

\bigskip\noindent\textbf{(1) Action on states.} 
In \(\mathcal{H}_0'\), the state is normalized with respect to \(\nu_0'\). Under the 
absolute continuity, the probability density \(\mathcal{P}\) transforms as 
\(\mathcal{P}' = \mathcal{P} e^{-W}\) (up to a normalization constant absorbed into 
the definition of \(\mathcal{P}\) by Assumption 2.2). Therefore:
\[
\Psi'_x = \sqrt{\mathcal{P}'} e^{i\mathcal{S}/\hbar} = \sqrt{\mathcal{P}} e^{-W/2} e^{i\mathcal{S}/\hbar}
= J \Psi_x.
\]

\bigskip\noindent\textbf{(2) Intertwining of SLDs.} 
The multiplication operators \(\hat{\eta}_\mu\) and \(\hat{\overline{\eta}}_\mu\) are 
defined pointwise in \(\phi\) and are independent of the measure. Since \(J\) acts 
by multiplication by \(e^{-W/2}\), it commutes with \(\hat{\eta}_\mu\):
\[
J\hat{\eta}_\mu J^\dagger = \hat{\eta}_\mu, \qquad J\hat{\overline{\eta}}_\mu J^\dagger = \hat{\overline{\eta}}_\mu.
\]
Similarly, the projector transforms covariantly:
\[
\hat{P}'_x = |\Psi'_x\rangle\langle\Psi'_x| = J|\Psi_x\rangle\langle\Psi_x|J^\dagger = J\hat{P}_x J^\dagger.
\]
Finally, the expectation values are invariant:
\[
\langle\hat{\eta}_\mu\rangle'_x = \langle\Psi'_x|\hat{\eta}_\mu|\Psi'_x\rangle_{\mathcal{H}_0'}
= \langle J\Psi_x|\hat{\eta}_\mu|J\Psi_x\rangle_{\mathcal{H}_0'}
= \langle\Psi_x|J^\dagger\hat{\eta}_\mu J|\Psi_x\rangle_0 = \langle\hat{\eta}_\mu\rangle_x.
\]
Assembling these, we obtain:
\[
\widetilde{\mathcal{T}}'(\eta)_\mu = J\widetilde{\mathcal{T}}(\eta)_\mu J^\dagger.
\]

\bigskip\noindent\textbf{(3) Invariance of the Fisher metric.} 
Using the intertwining property:
\begin{align*}
g^{\prime\mathrm{FS}}_{\mu\nu} 
&= \frac{1}{2}\langle\Psi'_x|\{\widetilde{\mathcal{T}}'_\mu, \widetilde{\mathcal{T}}'_\nu\}|\Psi'_x\rangle_{\mathcal{H}_0'} \\
&= \frac{1}{2}\langle J\Psi_x|J\{\widetilde{\mathcal{T}}_\mu, \widetilde{\mathcal{T}}_\nu\}J^\dagger|J\Psi_x\rangle_{\mathcal{H}_0'} \\
&= \frac{1}{2}\langle\Psi_x|\{\widetilde{\mathcal{T}}_\mu, \widetilde{\mathcal{T}}_\nu\}|\Psi_x\rangle_0 = g^{\mathrm{FS}}_{\mu\nu}.
\end{align*}
This completes the proof.
\end{proof}

\begin{remark}
Theorem \ref{thm:measure} rigorously establishes that the entire geometric structure — 
the isomorphism \(\widetilde{\mathcal{T}}\) and the quantum Fisher metric — is intrinsic 
to the pair \((\mathcal{C}, \mathcal{P})\) and does not depend on the auxiliary choice of 
Gaussian measure. This is the infinite-dimensional analog of the well-known fact in 
parametric quantum estimation that the SLD and Fisher metric are invariant under changes 
of reference measure \cite{amari2016}. The non-local structure of the SLD is essential 
for this invariance: a naive local multiplication operator would transform nontrivially 
under \(J\).
\end{remark}

\section{Holonomy Quantization and Topological Phases}

Define the covariant derivative
\[
D_{\mu} = \nabla_{\mu} - i\frac{m}{\hbar}\eta_{\mu}.
\]

\begin{lemma}[Flatness]
The connection \(D_{\mu} = \nabla_{\mu} - i\frac{m}{\hbar}\eta_{\mu}\) is flat:
\([D_{\mu}, D_{\nu}] = 0\).
\end{lemma}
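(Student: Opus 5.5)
The plan is to compute the commutator directly on an arbitrary smooth test section $f$ of the trivial line bundle over $\mathcal{C}\times M$. We treat $\eta_\mu$ as a connection one-form in the spacetime directions and reduce flatness to the vanishing of its curvature. Expanding
\[
[D_\mu,D_\nu]f = [\nabla_\mu,\nabla_\nu]f - i\frac{m}{\hbar}\bigl(\nabla_\mu\eta_\nu-\nabla_\nu\eta_\mu\bigr)f - \frac{m^2}{\hbar^2}[\eta_\mu,\eta_\nu]f,
\]
the last term vanishes because $\eta_\mu$ and $\eta_\nu$ act by multiplication by scalar functions, so they commute. The first term vanishes on scalar functions on $M$, since the Levi-Civita connection is torsion-free. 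Note that $f$ is a scalar in the $M$-directions: the $U(1)$ bundle is a line bundle and $\mathcal{C}$ is only a parameter. Hence the commutator reduces to the curvature two-form
\[
F_{\mu\nu} = -i\frac{m}{\hbar}\bigl(\nabla_\mu\eta_\nu - \nabla_\nu\eta_\mu\bigr),
\]
and it remains to show that this is zero.

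For that step I would invoke the definition $\eta_\mu = -i\frac{\hbar}{m}\nabla_\mu\ln\mathcal{K}$, so that $\eta_\mu$ is, for each fixed $\phi$, an exact one-form. Then
\[
\nabla_\mu\eta_\nu-\nabla_\nu\eta_\mu = -i\frac{\hbar}{m}\bigl(\nabla_\mu\nabla_\nu-\nabla_\nu\nabla_\mu\bigr)\ln\mathcal{K}=0,
\]
again by torsion-freeness acting on a scalar. Equivalently, one can argue componentwise: $\pi_\mu=\frac1m\nabla_\mu\mathcal{S}$ and $u_\mu=\frac{\hbar}{2m}\nabla_\mu\ln\mathcal{P}$ are both gradients of real scalars, so each is closed. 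This second form avoids any reference to a complex logarithm.

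The main obstacle is the well-definedness of $\ln\mathcal{K}$. Assumption 2.1 guarantees that $\mathcal{K}\neq0$ and gives a smooth polar decomposition, but the phase $\mathcal{S}$ may be multivalued on a non-simply-connected $M$. I would handle this by working locally: on any contractible chart, a smooth branch of $\ln\mathcal{K}=\frac12\ln\mathcal{P}+i\mathcal{S}/\hbar$ exists. The differential $d\ln\mathcal{K}=d\mathcal{K}/\mathcal{K}$ is globally single-valued even when the logarithm is not. Flatness is a local, pointwise statement, so the local computation suffices. Meanwhile the global multivaluedness of $\mathcal{S}$ is exactly what will produce the nontrivial, quantized holonomy in the next part of this section. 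I would also remark that the gauge shift $\eta_\mu\mapsto\eta_\mu+ic_\mu(x)$ preserves flatness only when $dc=0$. So the statement refers to the representative defined via $\mathcal{K}$, or more generally to closed gauge shifts.
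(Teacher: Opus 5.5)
Your proposal is correct and takes the same route as the paper, which simply notes that $\eta_\mu=-i\frac{\hbar}{m}\nabla_\mu\ln\mathcal{K}$ is a gradient and that $[\nabla_\mu,\nabla_\nu]\ln\mathcal{K}=0$ for a torsion-free connection; your explicit reduction of the commutator to the curvature $F_{\mu\nu}$ fills in the computation the paper defers to its companion reference. Your extra points are sound refinements the paper does not address: handling the branch of $\ln\mathcal{K}$ through the globally defined closed form $d\mathcal{K}/\mathcal{K}$, and observing that the shifts $\eta_\mu\mapsto\eta_\mu+ic_\mu$ preserve flatness only when $dc=0$.
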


\begin{proof}
The proof follows directly from the definition \(\eta_{\mu} = -i\frac{\hbar}{m}\nabla_{\mu}\ln\mathcal{K}\) 
and the fact that \([\nabla_{\mu}, \nabla_{\nu}]\ln\mathcal{K} = 0\) on a torsion-free connection. 
A detailed calculation is provided in \cite{meza2026topological}.
\end{proof}

For a closed loop \(\gamma\) in \(M\) that is not contractible (due to nontrivial topology or defects), the holonomy is
\[
\mathrm{Hol}_{\gamma} = \exp \left( i\frac{m}{\hbar} \oint_{\gamma} \eta_{\mu} \, dx^{\mu} \right).
\]
\subsection{Geometric Origin of the SLD Structure}

Before proving the holonomy quantization theorem, we clarify the deep connection 
between the flat $U(1)$ connection defined by $\eta_\mu$ and the non-local 
structure of the SLD operator established in Theorem \ref{thm:isomorphism}.

The complex velocity $\eta_\mu = -i\frac{\hbar}{m}\nabla_\mu\ln\mathcal{K}$ 
defines a flat connection on a complex line bundle $\mathcal{L} \to \mathcal{C}\times M$ 
whose generic section is the averaged matter amplitude $\mathcal{K}$. 
The connection one-form is:
\[
\mathcal{A}_\mu = -\frac{im}{\hbar}\eta_\mu = \nabla_\mu\ln\mathcal{K},
\]
so that the covariant derivative $D_\mu = \nabla_\mu - i\frac{m}{\hbar}\eta_\mu$ 
satisfies $D_\mu\mathcal{K} = 0$ by construction. The flatness condition 
$[D_\mu, D_\nu] = 0$ follows from the fact that $\eta_\mu$ is a pure gradient 
in field space.

Crucially, $\mathcal{K} = \sqrt{\mathcal{P}}e^{i\mathcal{S}/\hbar}$ is a 
\emph{multi-valued} section of this line bundle whenever $\mathcal{P}$ has zeros 
in $\mathcal{C}\times M$. At nodal points where $\mathcal{P}(\phi,x) = 0$, the 
logarithm $\ln\mathcal{K}$ develops branch cuts, and the section $\mathcal{K}$ 
fails to be single-valued. This multi-valuedness is not a pathology but rather 
the geometric origin of quantum interference: the flat connection has non-trivial 
holonomy around loops encircling nodal regions, and this holonomy is quantized 
precisely because $\mathcal{K}$ must be parallel-transported to itself up to a 
$U(1)$ phase after traversing a closed loop.

The projective structure induced by this holonomy is precisely what the SLD 
operator captures. The non-local projector $\hat{P}_x = |\Psi_x\rangle\langle\Psi_x|$ 
in the SLD arises because the quantum state $|\Psi_x\rangle$ is a normalized 
representative of the ray in the Hilbert space $\mathcal{H}_0$, and the flat 
$U(1)$ connection on the line bundle over $\mathcal{C}\times M$ descends to 
a Berry connection on the projective Hilbert bundle over $M$. The imaginary 
part of $\eta_\mu$ (the osmotic velocity $u_\mu$) encodes amplitude variations 
and determines the metric structure (Fisher information), while the real part 
$\pi_\mu$ encodes phase variations and determines the holonomy (Berry phase). 
The SLD operator \eqref{eq:SLDeta} unifies both through the combination 
$\hat{\eta}_\mu\hat{P}_x - \hat{P}_x\hat{\overline{\eta}}_\mu$.

\begin{remark}
The multi-valuedness of $\mathcal{K}$ is essential for the isomorphism 
$\widetilde{\mathcal{T}}$ to be non-trivial. If $\mathcal{P}$ were everywhere 
strictly positive and simply connected in field space, $\mathcal{K}$ would be 
single-valued, the holonomy would vanish identically ($n = 0$ for all loops), 
and the SLD would reduce to a trivial multiplication operator. The quantum 
structure captured by the non-local projector $\hat{P}_x$ is thus a direct 
consequence of the topological quantization of the flat $U(1)$ connection, 
which in turn originates from the nodal geometry of the probability density 
$\mathcal{P}$ induced by stochastic gravitational fluctuations.
\end{remark}
\begin{theorem}[Holonomy Quantization]\label{thm:holonomy}
Let \(\gamma\) be a non-contractible loop in \(M\). The holonomy of the complex 
velocity is a \(U(1)\) phase satisfying
\[
\frac{m}{\hbar} \oint_{\gamma} \eta_{\mu} \, dx^{\mu} = 2\pi n + \Delta\phi_{\mathrm{top}}, 
\qquad n \in \mathbb{Z},
\]
where the offset \(\Delta\phi_{\mathrm{top}}\) is a theory-dependent real constant 
determined by the classical topology of the field configuration, effective gravitational 
corrections, and the structure of phase singularities of \(\mathcal{P}\).
\end{theorem}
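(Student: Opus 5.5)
The approach is to reduce the line integral of \(\eta_\mu\) to the winding of the phase of the averaged amplitude \(\mathcal{K}\) along \(\gamma\). By definition,
\[
\frac{m}{\hbar}\eta_\mu = -i\,\nabla_\mu\ln\mathcal{K} = \frac{1}{\hbar}\nabla_\mu\mathcal{S} - \frac{i}{2}\nabla_\mu\ln\mathcal{P}.
\]
Hence, at fixed \(\phi\), the integral along \(\gamma\) splits into a real phase part and an imaginary amplitude part. For the amplitude part, Assumption 2.1 makes \(\mathcal{P}(\phi,\cdot)\) a smooth, strictly positive, single-valued function on \(M\). So \(\ln\mathcal{P}\) is a genuine function, and \(\oint_\gamma \nabla_\mu\ln\mathcal{P}\,dx^\mu = 0\) for every closed loop. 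Consequently the osmotic velocity \(u_\mu\) contributes nothing, and the integral is real:
\[
\frac{m}{\hbar}\oint_\gamma \eta_\mu\,dx^\mu = \frac{1}{\hbar}\oint_\gamma \nabla_\mu\mathcal{S}\,dx^\mu .
\]

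Next I treat the phase part. The key point is that it is \(\mathcal{K}\), not \(\mathcal{S}\), that must be single-valued on \(\mathcal{C}\times M\). Lifting \(\gamma\) to \(\tilde\gamma(t)=(\phi,\gamma(t))\), the function \(t\mapsto \mathcal{K}(\tilde\gamma(t))/|\mathcal{K}(\tilde\gamma(t))|\) is a continuous map \(S^1\to U(1)\). Its degree is an integer \(n\), and the change of \(\mathcal{S}/\hbar\) along the lift equals \(2\pi n\) by the standard winding-number argument (lifting through \(\mathbb{R}\to U(1)\)). The flatness lemma shows that \(\eta_\mu\) is closed. By Stokes' theorem the integral therefore depends only on the homotopy class of \(\gamma\), and the integer is constant under deformations of \(\gamma\) that stay within the region where \(\mathcal{K}\neq0\). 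This is what makes the statement meaningful for non-contractible loops: contractible loops give \(n=0\), while non-contractible ones can carry \(n\neq0\).

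The offset \(\Delta\phi_{\mathrm{top}}\) comes in through a relaxation of the single-valuedness of \(\mathcal{K}\). If \(M\) carries nontrivial topology, \(\mathcal{K}\) need only be a section of a flat line bundle; the same applies if the averaged action \(S[\phi,A;g^{(0)}+h]\) contains a multivalued (Aharonov--Bohm/Chern--Simons-type) term, or if effective gravitational corrections shift the phase. In that case, going around \(\gamma\) returns \(\mathcal{K}\) to itself only up to a fixed transition factor \(e^{i\Delta\phi_{\mathrm{top}}}\), set by the background holonomy of the line bundle. I would write \(\mathcal{K}=e^{i\chi}\mathcal{K}_0\), where \(\mathcal{K}_0\) is single-valued and \(\chi\) is a fixed multivalued phase with \(\oint_\gamma d\chi=\Delta\phi_{\mathrm{top}}\). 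Applying the winding argument to \(\mathcal{K}_0\) then gives the stated form \(2\pi n+\Delta\phi_{\mathrm{top}}\).

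The main obstacle is the \(\phi\)-dependence and the identification of \(\Delta\phi_{\mathrm{top}}\) as a constant. The integer \(n\) could a priori vary with \(\phi\), but it is continuous in \(\phi\) and takes values in \(\mathbb{Z}\), so it is locally constant. On the connected components of \(\{\mathcal{K}\neq0\}\subset\mathcal{C}\) it is therefore constant. Assumption 2.1 (\(\mathcal{K}\neq0\) everywhere) gives this globally, provided \(\mathcal{C}\) is connected. The flexible content is that \(\Delta\phi_{\mathrm{top}}\) depends only on the flat bundle data and not on \(\phi\). I would argue this by noting that the transition factor is fixed by the background geometry, which is independent of the matter configuration. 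Where this is not rigorous, \(\Delta\phi_{\mathrm{top}}\) should be taken as the theory-dependent input named in the statement.
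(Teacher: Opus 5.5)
Your proof is correct and follows the paper's skeleton. You split $-i\nabla_\mu\ln\mathcal{K}$ into $\frac{1}{\hbar}\nabla_\mu\mathcal{S}-\frac{i}{2}\nabla_\mu\ln\mathcal{P}$, discard the second term because $\mathcal{P}>0$ is single-valued, obtain $2\pi n$ from single-valuedness of $\mathcal{K}$, and treat $\Delta\phi_{\mathrm{top}}$ as theory-dependent input.

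You depart from the paper on where the offset comes from. The paper keeps $\mathcal{K}$ single-valued and ``collects'' extra contributions to $\mathcal{S}$: topological terms, $\langle S_2\rangle_h$, and nodes of $\mathcal{P}$. But if $\mathcal{K}$ is single-valued and nonvanishing, then $\oint_\gamma d\mathcal{S}\in 2\pi\hbar\mathbb{Z}$ however $\mathcal{S}$ is decomposed. Nodes are also ruled out by Assumption 2.1. So under the paper's own hypotheses the offset is trivial mod $2\pi$. Your relaxation to a section of a flat $U(1)$ bundle, $\mathcal{K}=e^{i\chi}\mathcal{K}_0$, is the coherent way to get a non-integer $\Delta\phi_{\mathrm{top}}$. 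It keeps $|\mathcal{K}|$ single-valued, hence $\mathcal{P}$ too, so the osmotic contribution still vanishes.

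Your degree/lifting argument replaces the paper's bare appeal to single-valuedness with a precise statement. It also works when $\mathcal{S}$ is defined only mod $2\pi\hbar$; a globally smooth real $\mathcal{S}$, as Assumption 2.1 literally reads, would force $n=0$. Your local-constancy argument for $n$ in $\phi$ covers a point the paper skips. Neither route proves that $\Delta\phi_{\mathrm{top}}$ is independent of $\phi$. You correctly flag this as input, which the statement's wording allows.
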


\begin{proof}
Since \(D_\mu \mathcal{K} = 0\), the matter amplitude \(\mathcal{K}\) is parallel-transported 
along any curve in \(M\). For a closed loop \(\gamma\), parallel transport yields 
\(\mathcal{K} \mapsto e^{i\theta}\mathcal{K}\), where the phase is given by the holonomy
\[
e^{i\theta} = \exp\left(i\frac{m}{\hbar}\oint_\gamma \eta_\mu dx^\mu\right).
\]

Now, \(\mathcal{K} = \sqrt{\mathcal{P}} e^{i\mathcal{S}/\hbar}\) with \(\mathcal{P} > 0\) 
and single-valued by Assumption 2.1. Taking the logarithmic derivative:
\[
d\ln\mathcal{K} = \frac{1}{2}d\ln\mathcal{P} + \frac{i}{\hbar}d\mathcal{S}.
\]
Since \(\mathcal{P}\) is strictly positive and single-valued on \(\mathcal{C} \times M\), 
the integral of \(d\ln\mathcal{P}\) around any closed loop vanishes:
\[
\oint_\gamma d\ln\mathcal{P} = 0.
\]
Therefore,
\[
\oint_\gamma d\ln\mathcal{K} = \frac{i}{\hbar}\oint_\gamma d\mathcal{S}.
\]

Single-valuedness of \(\mathcal{K}\) requires that \(\mathcal{K}\) return to its original 
value after traversing \(\gamma\), which means
\[
\oint_\gamma d\ln\mathcal{K} = 2\pi i n, \qquad n \in \mathbb{Z}.
\]

However, the action functional \(\mathcal{S}\) may contain contributions beyond the 
classical action \(S_0\). Specifically:
\begin{enumerate}
  \item \textbf{Classical topological terms:} If the classical action contains 
    topological invariants (e.g., winding numbers, Chern-Simons terms), the integral 
    \(\oint_\gamma dS_0\) can produce offsets that are integer multiples of \(2\pi\hbar\) 
    plus additional fractions.
  \item \textbf{Gravitational corrections:} The cumulant expansion yields 
    \(\mathcal{S} = S_0 + \langle S_2 \rangle_h + \cdots\). The term \(\langle S_2 \rangle_h\) 
    can generate effective topological contributions through backreaction of metric 
    fluctuations.
  \item \textbf{Phase singularities and SLD structure:} 
    If $\mathcal{P}[\Phi](x) = 0$ at isolated points of $\mathcal{C}\times M$, 
    the logarithm $\ln\mathcal{P}$ develops branch cuts. While these do not 
    affect the single-valuedness of $\mathcal{P}$, they contribute to the phase 
    integral through the analytic structure of $\mathcal{K}$. These nodal points 
    are precisely what make the flat $U(1)$ connection topologically non-trivial 
    and are responsible for the non-local projector structure of the SLD operator 
    in Theorem \ref{thm:isomorphism}. The quantized holonomy 
    $\frac{m}{\hbar}\oint_\gamma \eta_\mu dx^\mu = 2\pi n + \Delta\phi_{\mathrm{top}}$ 
    is the geometric invariant that the SLD encodes through its action on the 
    reference state $|\Psi_x\rangle$.
\end{enumerate}
Collecting all contributions, we write
\[
\oint_\gamma d\mathcal{S} = 2\pi\hbar n + \hbar\Delta\phi_{\mathrm{top}},
\]
where \(\Delta\phi_{\mathrm{top}} \in \mathbb{R}\) absorbs all non-integer offsets.

Using the definition \(\eta_\mu = -i\frac{\hbar}{m}\nabla_\mu\ln\mathcal{K}\), we obtain
\begin{align*}
\frac{m}{\hbar}\oint_\gamma \eta_\mu dx^\mu 
  &= -i\oint_\gamma d\ln\mathcal{K} \\
  &= -i\left(\frac{i}{\hbar}\oint_\gamma d\mathcal{S}\right) \\
  &= \frac{1}{\hbar}\oint_\gamma d\mathcal{S} \\
  &= 2\pi n + \Delta\phi_{\mathrm{top}},
\end{align*}
which completes the proof.
\end{proof}

\begin{remark}
The explicit computation of \(\Delta\phi_{\mathrm{top}}\) requires specifying the matter 
content and the background topology. A detailed analysis for a free scalar field on a 
conical spacetime, where \(\Delta\phi_{\mathrm{top}} = 2\pi\ell(1/\alpha - 1)\) with 
\(\alpha\) the deficit angle parameter, is presented in a companion paper on topological 
quantization \cite{meza2026topological}. Here we emphasize that the bundle isomorphism 
\(\widetilde{\mathcal{T}}\) established in Theorem \ref{thm:isomorphism} respects this 
holonomy structure: the SLD operator \(L_\mu\) inherits the same topological phase through 
the isomorphism, providing an operational interpretation via quantum estimation theory.
\end{remark}

\section{Conclusion}

We have established a rigorous bundle isomorphism between the complex velocity field 
\(\eta_\mu = \pi_\mu - i u_\mu\), which arises from averaging matter dynamics over stochastic 
gravitational fluctuations, and the symmetric logarithmic derivative (SLD) operator 
\(L_\mu\) of quantum estimation theory. The isomorphism is given explicitly by the 
non-local operator:
\[
\widetilde{\mathcal{T}}(\eta)_\mu = \frac{2im}{\hbar}\bigl(\hat{\eta}_\mu\hat{P}_x 
- \hat{P}_x\hat{\overline{\eta}}_\mu\bigr) + \frac{2im}{\hbar}\bigl(\langle\hat{\overline{\eta}}_\mu\rangle_x 
- \langle\hat{\eta}_\mu\rangle_x\bigr)\mathbb{I}.
\]
This maps gauge-equivalence classes of sections of the pullback bundle 
\(E = \pi_2^*(T^*M)\) to SLD operators on \(\mathcal{H}_0\). The non-locality of this 
operator — manifested through the projectors \(\hat{P}_x\) — is not a defect but rather 
an essential feature that captures the full quantum Fisher information, which depends on 
both the phase velocity \(\pi_\mu\) and the osmotic velocity \(u_\mu\).

A fundamental technical contribution of this work is the rigorous proof that the 
construction is independent of the choice of the reference Gaussian measure \(\nu_0\) 
on the infinite-dimensional configuration space \(\mathcal{C}\). By employing the 
Feldman-H\'ajek theorem on the mutual absolute continuity of Gaussian measures, we 
demonstrated that the isomorphism \(\widetilde{\mathcal{T}}\) and the quantum Fisher 
metric \(g^{\mathrm{FS}}\) are intrinsic geometric objects depending only on the physical 
probability density \(\mathcal{P}\).

The quantum Fisher information metric acquires the elegant form:
\[
g_{\mu\nu}^{\mathrm{FS}} = \frac{4m^2}{\hbar^2}\bigl[ 
\operatorname{Cov}(\pi_\mu,\pi_\nu) + \operatorname{Cov}(u_\mu,u_\nu) \bigr]_{\mathcal{P}}.
\]
This reveals that both the classical geodesic velocity \(\pi_\mu\) and the stochastic 
osmotic velocity \(u_\mu\) contribute positively to the ultimate precision limits of 
parameter estimation, saturating the quantum Cram\'er-Rao bound. The appearance of 
\(\operatorname{Cov}(\pi_\mu,\pi_\nu)\) is a genuine quantum enhancement absent in 
classical Fisher information.

As a further consequence, the flat \(U(1)\) connection \(D_\mu = \nabla_\mu - i\frac{m}{\hbar}\eta_\mu\) 
gives rise to a quantized holonomy for non-contractible loops in spacetime, 
\(\frac{m}{\hbar}\oint_\gamma \eta_\mu dx^\mu = 2\pi n + \Delta\phi_{\mathrm{top}}\), 
which mirrors the Dirac quantization condition and the Aharonov–Bohm effect. This topological phase constitutes a falsifiable prediction of the framework, potentially detectable in future precision atom interferometry experiments such as MAGIS-100 ~\cite{abe2021}, with detailed experimental signatures deferred to ~\cite{meza2026topological}.

We emphasize that the bundle isomorphism $\widetilde{\mathcal{T}}$ is not an 
abstract mathematical construction but a direct consequence of the geometric 
origin of quantum interference in stochastic gravity. The flat $U(1)$ connection 
defined by $\eta_\mu$ on the complex line bundle over $\mathcal{C}\times M$ 
provides the unifying structure: its curvature vanishes (flatness), but its 
holonomy is non-trivial due to the multi-valuedness of $\mathcal{K}$ induced 
by nodal regions of $\mathcal{P}$. The SLD operator inherits this holonomy 
through the non-local projector $\hat{P}_x$, which projects onto the ray 
defined by the parallel-transported section. This establishes that the quantum 
Fisher information geometry of the Madelung-Bohm formulation is fundamentally 
a topological quantum field theory in field space, with the stochastic velocity 
$u_\mu$ and the classical velocity $\pi_\mu$ playing the roles of metric and 
connection on the projective Hilbert bundle.

Beyond these specific results, the isomorphism established here bridges three previously 
distinct areas: stochastic gravity, quantum information geometry, and topological phases 
in quantum mechanics. It provides a unified mathematical framework in which the seemingly 
mysterious stochastic velocity \(u_\mu\) of the Madelung–Bohm formulation emerges naturally 
from spacetime fluctuations, while simultaneously revealing its deep connection to quantum 
estimation theory. Future work will explore the quantitative predictions for atom 
interferometers in detail, as well as the implications for cosmological correlations and 
the quantum–classical transition.

\section*{Data Availability}  
Data sharing is not applicable to this article as no new data were created or analyzed in this study.

\section*{Declarations}
The author declares no competing interests.

\section*{Acknowledgments}
Jorge Meza-Dom\'{\i}nguez thanks SECIHTI-M\'exico for the doctoral scholarship No. 1235731. 
This work was also partially supported by SECIHTI M\'exico under grants SECIHTI 
CBF-2025-G-1720 and CBF-2025-G-176. The author is grateful for the computing time granted 
by LANCAD and CONACYT in the Supercomputer Hybrid Cluster ``Xiuhcoatl'' at GENERAL 
COORDINATION OF INFORMATION AND COMMUNICATIONS TECHNOLOGIES (CGSTIC) of CINVESTAV. 
URL: \url{http://clusterhibrido.cinvestav.mx/} and to Hector Oliver Hernandez for his help 
with the code installations.

\bibliographystyle{unsrt}
\bibliography{ref}

@article{madelung1927,
  title = {Quantentheorie in hydrodynamischer Form},
  author = {Madelung, Erwin},
  journal = {Zeitschrift f{\"u}r Physik},
  volume = {40},
  number = {3-4},
  pages = {322--326},
  year = {1927},
  publisher = {Springer}
}

@article{bohm1952,
  title = {A suggested interpretation of the quantum theory in terms of "hidden" variables. {I}},
  author = {Bohm, David},
  journal = {Physical Review},
  volume = {85},
  number = {2},
  pages = {166--179},
  year = {1952},
  publisher = {American Physical Society}
}

@article{escobar2025,
  title = {Fundamental {Klein}-{Gordon} {Equation} from {Stochastic} {Mechanics} in {Curved} {Spacetime}},
  author = {Escobar-Aguilar, Eric S. and Matos, Tonatiuh and Jimenez-Aquino, J. I.},
  journal = {Foundations of Physics},
  volume = {55},
  number = {4},
  pages = {60},
  year = {2025},
  doi = {10.1007/s10701-025-00873-y}
}

@article{meza2026topological,
  title = {Topological {Quantization} of {Complex} {Velocity} in {Stochastic} {Spacetimes}},
  author = {Meza-Dom{\'\i}nguez, Jorge and Matos, Tonatiuh},
  journal = {arXiv preprint},
  volume = {arXiv:2603.25016},
  year = {2026}
}

@article{braunstein1994,
  title = {Statistical distance and the geometry of quantum states},
  author = {Braunstein, Samuel L. and Caves, Carlton M.},
  journal = {Physical Review Letters},
  volume = {72},
  number = {22},
  pages = {3439--3443},
  year = {1994},
  publisher = {American Physical Society}
}

@article{paris2009,
  title = {Quantum estimation for quantum technology},
  author = {Paris, Matteo G. A.},
  journal = {International Journal of Quantum Information},
  volume = {7},
  number = {supp01},
  pages = {125--137},
  year = {2009},
  publisher = {World Scientific}
}

@book{amari2016,
  title = {Information {Geometry} and {Its} {Applications}},
  author = {Amari, Shun-ichi},
  series = {Applied Mathematical Sciences},
  volume = {194},
  year = {2016},
  publisher = {Springer}
}

@book{glimmjaffe1987,
  title = {Quantum {Physics}: {A} {Functional} {Integral} {Point} of {View}},
  author = {Glimm, James and Jaffe, Arthur},
  year = {1987},
  publisher = {Springer}
}

@article{abe2021,
  title = {{MAGIS}-100: next-generation intermediate-baseline detector for dark matter and gravitational waves},
  author = {Abe, M. and others},
  journal = {Quantum Science and Technology},
  volume = {6},
  number = {4},
  pages = {044003},
  year = {2021},
  publisher = {IOP Publishing}
}

@incollection{sbitnev2012,
  author    = {Sbitnev, Valery I.},
  title     = {Bohmian trajectories and the path integral paradigm — complexified lagrangian mechanics},
  booktitle = {Theoretical Concepts of Quantum Mechanics},
  editor    = {Pahlavani, Mohammad Reza},
  publisher = {IntechOpen},
  year      = {2012},
  pages     = {135--160},
  doi       = {10.5772/34520}
}

@book{bogachev1998,
  author    = {Vladimir I. Bogachev},
  title     = {Gaussian Measures},
  series    = {Mathematical Surveys and Monographs},
  volume    = {62},
  publisher = {American Mathematical Society},
  address   = {Providence, RI},
  year      = {1998}
}

@article{Chavanis2024,
   author = {Chavanis, P.-H.},
   title = {{On the connection between Nelson's stochastic quantum mechanics and Nottale's theory of scale relativity}},
   journal = {Axioms},
   volume = {13},
   number = {9},
   pages = {606},
   year = {2024},
   doi = {10.3390/axioms13090606},
   issn = {2075-1680},
   url = {https://www.mdpi.com/2075-1680/13/9/606}
}

\end{document}